\theoremstyle{plain}
\newtheorem{theorem}{Theorem}[section]
\newtheorem{corollary}[theorem]{Corollary}
\theoremstyle{definition}
\newtheorem{assumption}[theorem]{Assumption}
\theoremstyle{remark}
\icmltitlerunning{CARTS: Collaborative Recommendation Agent Framework for Titles}
\begin{document}

\twocolumn[
\icmltitle{CARTS: Collaborative Agents for Recommendation Textual Summarization
}



\icmlsetsymbol{equal}{*}

\begin{icmlauthorlist}
\icmlauthor{Jiao Chen}{equal,Bellevue}
\icmlauthor{Kehui Yao}{equal,Bellevue}
\icmlauthor{Reza Yousefi Maragheh}{equal,Sunnyvale}
\icmlauthor{Kai Zhao}{equal,Sunnyvale}
\icmlauthor{Jianpeng Xu}{Sunnyvale}
\icmlauthor{Jason Cho}{Sunnyvale}
\icmlauthor{Evren Korpeoglu}{Sunnyvale}
\icmlauthor{Sushant Kumar}{Sunnyvale}
\icmlauthor{Kannan Achan}{Sunnyvale}
\end{icmlauthorlist}

\icmlaffiliation{Bellevue}{Walmart Global Tech, Bellevue, WA, US}
\icmlaffiliation{Sunnyvale}{Walmart Global Tech, Sunnyvale, US}

\icmlcorrespondingauthor{Jiao Chen}{jiao.chen0@walmart.com}
\icmlcorrespondingauthor{Kehui Yao}{kehui.yao@walmart.com}
\icmlcorrespondingauthor{Reza Yousefi Maragheh}{reza.yousefimaragheh@walmart.com}
\icmlcorrespondingauthor{Kai Zhao}{kai.zhao@walmart.com}

\icmlkeywords{Machine Learning, ICML}

\vskip 0.3in
]


\printAffiliationsAndNotice{\icmlEqualContribution} 

\begin{abstract}
Current recommendation systems often require some form of textual data summarization, such as generating concise and coherent titles for product carousels or other grouped item displays. While large language models have shown promise in NLP domains for textual summarization, these approaches do not directly apply to recommendation systems, where explanations must be highly relevant to the core features of item sets, 
adhere to strict word limit constraints. In this paper, we propose CARTS (Collaborative Agents for Recommendation Textual Summarization), a multi-agent LLM framework designed for structured summarization in recommendation systems. CARTS decomposes the task into three stages—Generation Augmented Generation (GAG), refinement circle, and arbitration, where successive agent roles are responsible for extracting salient item features, iteratively refining candidate titles based on relevance and length feedback, and selecting the final title through a collaborative arbitration process. Experiments on large-scale e-commerce data and live A/B testing show that CARTS significantly outperforms single-pass and chain-of-thought LLM baselines, delivering higher title relevance and improved user engagement metrics. 
\end{abstract}

\section{Introduction}

\begin{figure*}
    \centering
    \includegraphics[width=1.0\linewidth]{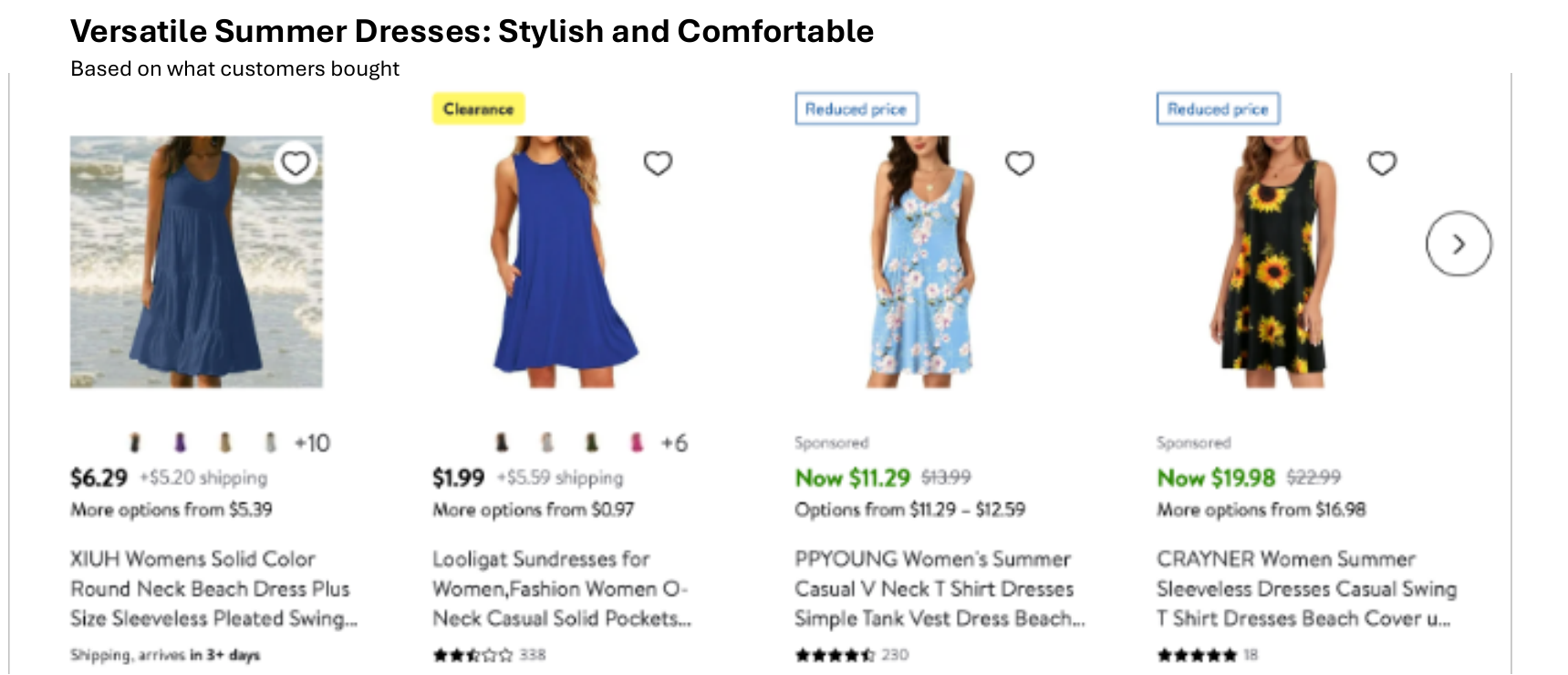}
    \caption{Example of a Product Carousel with Human-Curated Module Title}
    \label{fig:summer}
\end{figure*}

Modern recommendation systems increasingly rely on textual summarization to enhance transparency, usability, and engagement. A common example is the need to generate concise and coherent titles for grouped item displays—such as product carousels or recommendation modules—that communicate the shared theme or purpose of the items shown \cite{ge2022survey}. These summary titles serve not only to improve user understanding but also to drive attention and interaction within limited user interface space \cite{zhang2020explainable}. 
Figure \ref{fig:summer} illustrates such a scenario in e-commerce, where a module displays several sundresses. The accompanying module title—“Versatile Summer Dresses: Stylish and Comfortable”—serves as a human-readable summary of the visual and semantic commonalities across the items.


Automatically generating such titles presents unique challenges in semantic aggregation, coverage, and conciseness. Multi-agent LLM frameworks have shown promise in domains like law, finance, and long-form summarization—where agents collaboratively process lengthy, unstructured documents—their applicability to recommendation systems remains limited and underexplored. In contrast to these domains, recommendation explanation involves summarizing structured item metadata (e.g., titles, categories, reviews) across diverse products into a single, short, behaviorally effective title \cite{wang2024survey, liang2023encouraging, du2023improving}. This task introduces unique challenges: explanations must be generated under strict length constraints, reflect semantic overlap across multiple items, and align with business objectives such as user engagement and conversion. Existing agentic frameworks are typically optimized for coherence and factuality, not for maximizing item coverage or optimizing for click-through rates in user interfaces. 


To fill the research gap above, 
we propose CARTS (Collaborative Recommendation Agent Framework for Titles), a multi-agent LLM-based framework for generating module-level recommendation explanations.  This collaborative process enables more accurate, diverse, and UI-aligned summaries compared to single-agent LLM baselines. CARTS decomposes the task into three stages. First, in the \emph{Generation Augmented Generation (GAG)} stage, a two-step process is used to distill key item features and synthesize initial candidate titles. Next, during the \emph{Refinement Circle}, feedback agents iteratively critique each title with respect to item relevance, stylistic constraints, and length limitations. These critiques are then used to guide the generator in producing progressively improved title versions across multiple iterations. Finally, in the \emph{Arbitration} stage, an Arbitrator selects the final title that best balances semantic relevance and constraint satisfaction. 

In addition to its architectural novelty, CARTS provides a theoretical foundation for the agent collaboration process by framing title generation as a constrained coverage optimization problem. We theoretically analyze the \emph{Refinement Circle} and derive approximation guarantees on the number of refinement steps required to reach near-optimal coverage under practical character-length constraints. Specifically, under assumptions of reliable feedback and generator agents, we prove that CARTS can achieve a desired fraction of optimal relevance with high probability in a bounded number of iterations. This theoretical insight not only differentiates CARTS from prior heuristic agent workflows, but also grounds its design in convergence-aware optimization.

We evaluate CARTS through extensive offline experiments and online A/B tests on a real-world e-commerce platform. Results demonstrate that CARTS improves both module-level relevance scores and business outcomes such as click-through rate (CTR), add-to-cart rate (ATCR), and gross merchandise value (GMV), highlighting the promise of multi-agent LLM systems in real-world recommendation workflows.

Our contribution are summarized as follows: 
\begin{itemize}

    \item We propose CARTS, a novel multi-agent LLM framework that integrates generation, refinement and arbitration to collaboratively generate concise module titles that maximize item-level relevance coverage under practical constraints.
    
    \item We provide a theoretical analysis of CARTS’s \emph{Refinement Circle}, proving an approximation guarantee on the number of steps required to achieve near-optimal item coverage under length constraints, based on reliability assumptions of the feedback and generator agents.

    \item We demonstrate the empirical effectiveness of CARTS through comprehensive offline experiments and online A/B tests, showing consistent improvements in both explanation quality and commercial impact.

\end{itemize}

\begin{figure*}[h]
    \centering
    \includegraphics[width=0.95\linewidth]{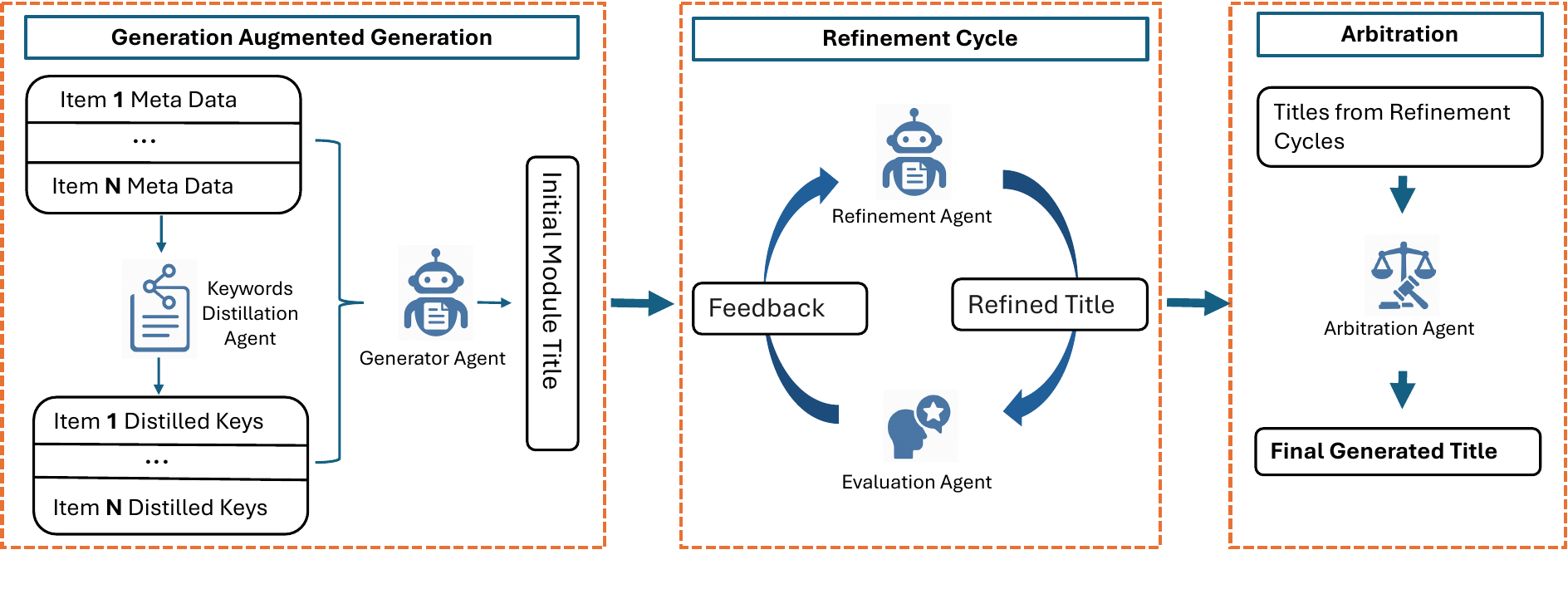}
    \caption{The overview of proposed multi-agent methods for module title generation.}
    \label{fig:overview}
\end{figure*}

\section{Related Works}
Recommendation systems are crucial in various industries, retrieving relevant documents for users based on context. Recommendation explaination can help reduce customer confusion and abandonment \cite{ge2022survey}, making it important to explain recommendations and align them with explanations \cite{zhang2020explainable}.



Classic recommendation explanation approaches often use surrogate models trained alongside the primary recommendation model \cite{catherine2017explainable, chen2021generate, wang2018reinforcement}. Recent advancements in Large Language Models (LLMs) have extended their applicability to various domains \cite{maragheh2023llm, maragheh2023llm2, chen2023knowledge, wei2024llmrec}, including recommendation explanation \cite{lei2023recexplainer, gao2024dre}. However, default LLM-based frameworks may struggle with complex tasks, potentially leading to hallucinations \cite{huang2023survey}. To address this, agentic frameworks have emerged, combining LLMs with planning and memory modules to enhance performance and execute complex tasks more effectively \cite{wang2024survey, zhang2024survey}.

The most simple agentic frameworks use a single agent to complete the entire sequence of tasks. While effective in many cases \cite{wang2023c, zhang2023generative}, single-agent frameworks may struggle to handle highly complex, multi-faceted tasks due to their lack of specialization and inability to multitask effectively. Generating recommendation explanations, for instance, requires satisfying multiple, often competing objectives, such as ensuring high relevance to recommendations while being concise, persuasive, and transparent to customers. Single-agent approaches may not be able to sufficiently balance these diverse requirements. Multi-agent frameworks, by contrast, uses the collective intelligence of individual specialized LLM agents, are capable of imitating complex settings where humans engage in collaboration to achieve a common goals, through planning, discussions, decision making,  task conduction, and evaluation \cite{wang2024multi, liang2023encouraging, du2023improving}.

Different from existing recommendation explanation studies, which usually focus on  explaining the single recommended item, CARTS introduces a novel multi-agent framework specifically tailored for the task of generating unified and compelling carousel titles for a list of recommended items, aiming to improve module transparency and increase customer engagement. Previous existing multi-agent systems mainly focus on general problem-solving or conversational collaboration, CARTS deploys specialized LLM agents that iteratively refine the generated module titles to address the diverse requirements in effective eCommerce carousel titling, leading to a significant improvement in title quality compared to single-agent or simpler sequential approaches.

\section{Methodology}

\label{sec:methodology}

In this section, we formally describe the proposed \textbf{CARTS} framework. Our approach consists of three main components: (i) \emph{Generation Augmented Generation (GAG)} (ii) \emph{Refinement Circle} (iii) \emph{Arbitration}. Figure~\ref{fig:overview} illustrates the entire pipeline.

\subsection{Problem Definition}
\label{sec:problem_def}
Suppose a recommendation system produces a list of $N$ items: $\mathcal{I} \;=\; \{ I_1, I_2, \dots, I_N \} $ to be displayed in a carousel interface. Each item $I_i$ contains textual metadata such as: (i) Catalog Information $C_i$ (e.g., product categories or sub-categories), (ii) Title Text $T_i$, Supplementary Text $P_i$ (e.g., seller descriptions or customer reviews). 
We write $I_i = (C_i, T_i, P_i)$ for $i = 1, 2, \dots, N$. Our goal is to generate a succinct and persuasive \emph{module title}, denoted $M_{\mathrm{title}}$, that highlights the shared use cases, advantages, or attributes among the $N$ recommended items. Formally, we seek a function $\mathrm{GEN}$ such that $M_{\mathrm{title}} \;=\; \mathrm{GEN}\bigl(I_1, I_2, \dots, I_N\bigr)$.

In addition to simply generating a title, we further desire that $M_{\mathrm{title}}$ to be (i) relevant to as many items in $\{I_1, I_2, \ldots, I_N\}$ as possible, and (ii) satisfy a given set of imposed constraints on the written text (for instance to respects a practical length limit, such as a maximum of $K$ characters--reflecting UI constraints or readability considerations).

Let us define a relevance indicator, $R\bigl(M_{\mathrm{title}}, I_i\bigr)$, which evaluates how well the title corresponds to item $I_i$. For each item, $R$ outputs $1$ if $M_{\mathrm{title}}$ is deemed relevant, and $0$ otherwise.\footnote{In practice, $R$ could be a more nuanced function (e.g., a continuous measure of semantic overlap), but here we use a binary indicator for simplicity.} By assuming at least one constraint for character length, we pose the selection of $M_{\mathrm{title}}$ as the following optimization problem:
\begin{align}
\max_{M_{\mathrm{title}}} \quad & \sum_{i=1}^{N} R\bigl(M_{\mathrm{title}}, I_i\bigr) \label{eq:obj:max-items} \\
\text{subject to} \quad & \mathrm{len}\bigl(M_{\mathrm{title}}\bigr) \le K, \label{eq:constr:length} \\
& M_{\mathrm{title}} \in \mathcal{C} \label{eq:constr:others}
\end{align}

where $\mathrm{len}\bigl(M_{\mathrm{title}}\bigr)$ denotes the character length of the proposed title, and $K$ is a threshold specified by the interface constraints. The set $\mathcal{C}$ (constraint~\eqref{eq:constr:others}) can represent any additional conditions, such as stylistic guidelines or language appropriateness rules.

In summary, we wish to choose a title $M_{\mathrm{title}}$ that maximizes the count of items for which the title is relevant, while ensuring the title length remains below $K$. This setup allows for diverse scoring functions or additional constraints (e.g., word-based or phrasing constraints) as required by the application. In the subsequent sections, we discuss how our LLM-based multi-agent framework (CARTS) is equipped to implicitly balance these objectives and constraints when generating, refining, and selecting the final module title.

\subsection{Generation Augmented Generation (GAG)}
\label{sec:gag}
A naive large language model (LLM) approach is to concatenate $I_1, I_2, \dots, I_N$ into a single prompt and directly request a module title. As discussed in Section\,\ref{sec:problem_def}, our goal is not only to generate a concise title but also to \emph{maximize the number of items} for which this title is relevant, all while satisfying a character-length constraint (see Eq.\,\ref{eq:obj:max-items}--\ref{eq:constr:length}). Direct prompting often yields irrelevant or under-specified titles that fail to cover the shared attributes or exceed length limits. Hence, we propose \emph{Generation Augmented Generation} (\textbf{GAG}) in two steps, designed to spotlight each item’s most salient features and thereby improve coverage within allowable length bounds.

\subsubsection{Distillation of Keywords}
In the first step, we focus on extracting a small set of keywords for each item to highlight its essential features. Let $\mathrm{DISTILL}$ be an LLM-driven function that produces $l$ keywords from the catalog information, title text, and supplementary text of item $I_i$:
\begin{equation}
\label{eq:distill}
    \{ K_i^1, K_i^2, \ldots, K_i^l \} 
    \;=\; \mathrm{DISTILL}(C_i, T_i, P_i),
\end{equation}
where $K_i^j$ is the $j$-th keyword for item $I_i$. This keyword set allows the subsequent generation stage to more effectively capture overlapping aspects across all items, thus increasing the potential for broad relevance.

\subsubsection{Title Generation with Augmented Prompts}
Next, we augment the original metadata $I_i$ with the distilled keyword set $\{ K_i^1, \dots, K_i^l \}$. Define $G_{\mathrm{title}}$ as an LLM-based function that produces a concise module title from the augmented prompt:
\begin{equation}
\label{eq:aug_title}
    M_{\mathrm{title}}^{(0)}
    \;=\;
    G_{\mathrm{title}}\Bigl(
    \{(I_1, K_1^{1:l}), \dots, (I_N, K_N^{1:l})\}
    \Bigr).
\end{equation}
Here, $M_{\mathrm{title}}^{(0)}$ is the \emph{initially generated} title, which should ideally cover the shared attributes of $\{I_1, \dots, I_N\}$ as much as possible while remaining within the length limit $K$. In practice, the prompt can include explicit reminders (e.g., ``The title must not exceed $K$ characters and should be relevant to as many items as possible'') to help the LLM internalize these constraints.

\subsection{Refinement Circle}
\label{sec:refine}
To further improve coverage and manage the length constraint, we introduce a refinement loop involving multiple LLM agents:
\begin{enumerate}
    \item A \textbf{generator agent}, which proposes a candidate title based on the item information and the feedback.
    \item A \textbf{feedback agent}, which evaluates the candidate title and provides natural language feedback to the item list. For example: (i) whether the title adheres to the character limit $K$, and (ii) whether each item is relevant to the title.
\end{enumerate}

Formally, let $\mathrm{EVAL}$ be a function that critiques a candidate title $M_{\mathrm{title}}^{(0)}$ with respect to the item set $\mathcal{I}$ and the constraints in Eqs.\,\ref{eq:obj:max-items}--\ref{eq:constr:length}. The feedback is then used to refine the generation:
\begin{equation}
\label{eq:refine}
\begin{aligned}
&\text{Feedback} 
\;=\; 
\mathrm{EVAL}\bigl(M_{\mathrm{title}}^{(0)},\, I_1,\dots,I_N\bigr),\\
&M_{\mathrm{title}}^{(1)}
\;=\;
\mathrm{GEN}\Bigl(\text{Feedback},\, M_{\mathrm{title}}^{(0)},\, \{K_i^j\}\Bigr).
\end{aligned}
\end{equation}

We refer to this iterative process as \emph{Refinement Circle}, where the generator agent refines the title based on feedback to improve coverage while maintaining the length constraint $K$.  While prior work mostly stops after a predefined number of iterations \citep{wu2023autogen, yao2022react}, our key contribution is a theoretical bound on the number of iterations $T$ required to approximate the optimal title. Specifically, we analyze the convergence behavior of the refinement circle and formally characterize how quickly it approaches the optimal title defined in Eq.,\ref{eq:obj:max-items}. We present the detailed analysis in Section~\ref{sec:bound for optimal title}.

\subsection{Arbitration Agents}
\label{sec:mod_arb}
Since LLM sampling can yield diverse outputs, the system generates $k$ candidate titles by executing the \emph{GAG} and \emph{Refinement Circle} stages multiple times:
\[
    \bigl\{M_{\mathrm{title}}^{(1)}, M_{\mathrm{title}}^{(2)}, \dots, M_{\mathrm{title}}^{(k)}\bigr\},
\]
for the same set of items $\mathcal{I}$. Each candidate title is accompanied by a corresponding ``reasoning trace'' (e.g., chain-of-thought or justification), and may differ in both coverage (i.e., relevance to the items) and length.

An arbitrator agent selects the best module title among the $k$ candidates. Let $\mathrm{ARBIT}$ be an LLM-driven function that takes as input the $k$ candidate titles and the moderator’s summary $S_{\mathrm{mod}}$, and outputs a single best title:
\begin{equation}
\label{eq:arb}
    M_{\mathrm{final}}
    \;=\;
    \mathrm{ARBIT}\Bigl(
        \{M_{\mathrm{title}}^{(1)}, \dots, M_{\mathrm{title}}^{(k)}\}, 
        \; S_{\mathrm{mod}}
    \Bigr).
\end{equation}
To approximate the objective in Eqs.\,\ref{eq:obj:max-items}--\ref{eq:constr:length}, the arbitrator may internally rank each title based on (i) Coverage: The number of items $I_i$ for which the title is deemed relevant, (ii) Length Compliance: Whether (or how closely) the title adheres to the character limit $K$, (iii) Stylistic or Additional Constraints: Including any platform or language guidelines.

Thus, $M_{\mathrm{final}}$ is the final module title displayed in the carousel, selected to balance item coverage and constraint satisfaction in accordance with our optimization framework. An algorithmic summary of the CARTS framework is provided in Appendix~A.

\begin{table*}[h]
    \centering
    \caption{Compare CARTS with benchmarks.}
    \small
    \begin{tabular}{lllllllll}
        \toprule
        \multicolumn{1}{c}{\multirow{2}{*}{\textbf{Benchmarks}}} & \multicolumn{2}{c}{\textbf{Beauty}} & \multicolumn{2}{c}{\textbf{Electronics}} & \multicolumn{2}{c}{\textbf{Fashion}} & \multicolumn{2}{c}{\textbf{Home and Kitchen}} \\
        \cmidrule(lr){2-3} \cmidrule(lr){4-5} \cmidrule(lr){6-7} \cmidrule(lr){8-9}
        \multicolumn{1}{c}{}                           & \textbf{LLM judge}     & \textbf{BERT}    & \textbf{LLM judge}        & \textbf{BERT}       & \textbf{LLM judge}   & \textbf{BERT}  & \textbf{LLM judge}   & \textbf{BERT}   \\
        \midrule
        Vanilla                          &    0.73           &    0.56             & 0.852                 &        0.572           &     0.739        &    0.591          &   0.831          &    0.562           \\
        DRE                                            & 0.745         &   0.582              & 0.851            &   0.602                & 0.779       &  0.6            & 0.853       &   0.57            \\
        CoT                                        & 0.744         &   0.57              & 0.866            &  0.578                 & 0.751       &     0.6         & 0.845       &    0.567           \\							
        LLM-CoT                          &    0.809           &    0.57             & 0.865                 &    0.575               &   0.835          &  0.589            &   0.872          &    0.57           \\
        CARTS                                          &    0.891           &  0.634               &    0.939              &    0.655               &    0.928         &      0.70        &    0.953         &   0.631  \\
        \bottomrule
    \end{tabular}
    \label{tab:offline}
\end{table*}

\subsection{Approximation Guarantee of CARTS}\label{sec:bound for optimal title}
In this subsection, we derive a theoretical bound on the number of iterations $T$ required to approximate the optimal title during \emph{Refinement Circle}. Let $\mathcal{C}$ denote the set of all feasible titles satisfying the length constraint $\mathrm{len}(M) \le K$. We define the optimal achievable coverage as
\[
\mathrm{OPT} := \max_{M \in \mathcal{C}} \sum_{i \in \mathcal{N}} R(M, I_i),
\]
where $R(M, I_i) \in \{0, 1\}$ indicates whether the title $M$ covers item $I_i$.

To enable the convergence analysis, we introduce two assumptions:
\begin{assumption}[$\beta$‐reliable feedback agent]
\label{ass:evaluator+}
At iteration $m$ let
$
  C_m=\sum_{i\in\mathcal N} R\!\bigl(M_{\mathrm{title}}^{(m)},I_i\bigr)
$
be the current coverage.
If $C_m<\mathrm{OPT}$, the feedback agent outputs a non–empty set
$
  \mathcal U_m\subseteq\mathcal N
$
such that
$
  \exists I_j\in\mathcal U_m
$
with
$
  R(M_{\mathrm{title}}^{(m)},I_j)=0
$
(i.e.\ \emph{at least one} uncovered item is flagged)
with probability at least $\beta>0$.
\end{assumption}

\begin{assumption}[$\gamma$‐reliable generator agent]
\label{ass:generator+}
Conditioned on $\mathcal U_m$,
the generator produces a refined title
$M_{\mathrm{title}}^{(m+1)}$ satisfying
\begin{enumerate}
  \item $C_{m+1}\;-\;C_m\;\ge\;1$
        \hfill\textbf{(covers \emph{at least one} new item);}  
  \item $\mathrm{len}\!\bigl(M_{\mathrm{title}}^{(m+1)}\bigr)\le K$;
  \item $R(M_{\mathrm{title}}^{(m+1)},I_i)\ge
         R(M_{\mathrm{title}}^{(m)},I_i)$ for all $i$
        \hfill\textbf{(no regression).}
\end{enumerate}
The generator succeeds with probability $\gamma>0$.
\end{assumption}

\begin{theorem}[Approximate optimal coverage in $T$ rounds]
\label{thm:hp+}
Fix $\alpha\in(0,1]$ and $\varepsilon\in(0,1)$.
Let $p:=\beta\gamma$ and define
\[
  \Lambda(\alpha,\beta,\gamma,\mathrm{OPT},\varepsilon)
  \;:=\;
  \Bigl\lceil
     \frac{\alpha\,\mathrm{OPT}-C_0}{p}
     \;+\;
     \frac{2\ln(1/\varepsilon)}{p}
  \Bigr\rceil .
\]
Under Assumptions~\ref{ass:evaluator+}–\ref{ass:generator+},
running the generator–feedback loop for
$
  T\;\ge\;\Lambda(\alpha,\beta,\gamma,\mathrm{OPT},\varepsilon)
$
iterations guarantees
\[
  \Pr\!\bigl[C_T\;\ge\;\alpha\,\mathrm{OPT}\bigr]\;\ge\;1-\varepsilon .
\]
\end{theorem}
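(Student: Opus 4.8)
The plan is to reduce the refinement dynamics to a hitting-time question for a biased coin and then control the tail with a Chernoff bound. First I would record the two structural facts that Assumptions~\ref{ass:evaluator+}--\ref{ass:generator+} supply. Fix the target gap $d := \lceil \alpha\,\mathrm{OPT} - C_0 \rceil$, the number of fresh items we must still cover. Because $\alpha \le 1$, as long as the current coverage is below the target we automatically have $C_m < \mathrm{OPT}$, so the feedback agent flags an uncovered item with probability at least $\beta$ and, conditioned on that flag, the generator both covers a new item and respects the length constraint with probability at least $\gamma$. Call round $m$ a \emph{success} when both events occur; its conditional probability given the history $\mathcal{F}_{m-1}$ is at least $p = \beta\gamma$ on the event that the target has not yet been reached. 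The no-regression clause (item~3 of Assumption~\ref{ass:generator+}) makes $C_m$ nondecreasing, and item~1 makes every success increase $C_m$ by at least one integer unit. Hence after $T$ rounds, $C_T \ge \alpha\,\mathrm{OPT}$ holds as soon as at least $d$ successes have occurred.

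Next I would convert ``at least $d$ successes in $T$ rounds'' into a statement about a binomial. Let $\xi_m \in \{0,1\}$ indicate a success in round $m$. The complication is that $\Pr[\xi_m = 1 \mid \mathcal{F}_{m-1}] \ge p$ is guaranteed only while the target is unreached, and the flagged set $\mathcal{U}_m$ is itself history-dependent. I would handle this with a standard coupling: extend the process past the first time the target is hit by declaring each subsequent round an independent success with probability exactly $p$ (this can only help), so that every round now has conditional success probability at least $p$. A step-by-step coupling to i.i.d.\ $\mathrm{Bernoulli}(p)$ variables then shows that the total number of successes stochastically dominates $\mathrm{Bin}(T,p)$; moreover the first $d$ successes of the extended process are genuine successes of the real process (artificial rounds begin only after the target, i.e.\ after $d$ real successes), so reaching $d$ successes in the coupling is equivalent to the real process reaching coverage $\alpha\,\mathrm{OPT}$. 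Consequently $\Pr[C_T \ge \alpha\,\mathrm{OPT}] \ge \Pr[\mathrm{Bin}(T,p) \ge d]$.

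Finally I would apply a multiplicative Chernoff lower-tail bound. With $\mu := Tp$ and $d \le \mu$,
\[
\Pr[\mathrm{Bin}(T,p) < d] \;\le\; \exp\!\Bigl(-\tfrac{(\mu-d)^2}{2\mu}\Bigr),
\]
and it remains to choose $T$ so that this exponent is at least $\ln(1/\varepsilon)$. The additive structure of $\Lambda$ is exactly what makes this work: the term $(\alpha\,\mathrm{OPT}-C_0)/p$ supplies the ``drift'' needed so that $\mu \ge d$, while the term $2\ln(1/\varepsilon)/p$ supplies the surplus $\mu - d$ that pushes the deviation exponent above $\ln(1/\varepsilon)$, and the outer ceiling makes $T$ an integer. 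Substituting $T \ge \Lambda$ and simplifying then yields $\Pr[\mathrm{Bin}(T,p) < d] \le \varepsilon$, hence $\Pr[C_T \ge \alpha\,\mathrm{OPT}] \ge 1-\varepsilon$. I expect the main obstacle to be this last step: making the two adaptive subtleties (the conditional success probabilities and the history-dependent flag set) fully rigorous through the coupling, and verifying that the chosen $\Lambda$ genuinely drives the Chernoff exponent past $\ln(1/\varepsilon)$ rather than merely near it, since the constant in the deviation term is delicate and may call for the sharper relative-entropy tail bound or a slightly enlarged constant on the drift term.
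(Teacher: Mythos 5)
Your proposal follows essentially the same route as the paper's proof: define a per-round improvement indicator whose conditional success probability is at least $p=\beta\gamma$ while the target is unreached, argue that no-regression plus unit gains give $C_T \ge C_0 + (\text{number of successes})$, dominate the success count by $\operatorname{Binomial}(T,p)$, and finish with a multiplicative Chernoff lower tail. Your treatment of the adaptivity issue via an explicit coupling (padding the process with artificial $\operatorname{Bernoulli}(p)$ successes after the hitting time) is more careful than the paper, which simply asserts conditional independence and stochastic domination.

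The one step you flagged as delicate is exactly where the paper's own algebra is loose, and your suspicion is correct. Writing $d := \alpha\,\mathrm{OPT}-C_0$ and $\mu := pT$, the Chernoff bound $\Pr[Z\le(1-\delta)\mu]\le\exp(-\delta^2\mu/2)$ with $(1-\delta)\mu=d$ requires $\delta^2\mu=(\mu-d)^2/\mu \ge 2\ln(1/\varepsilon)$. Since $\delta\le 1$, one has $(\mu-d)^2/\mu = \delta(\mu-d)\le \mu-d$, so the condition $\mu-d\ge 2\ln(1/\varepsilon)$ guaranteed by $T\ge\Lambda$ is \emph{necessary} for the exponent to clear $\ln(1/\varepsilon)$ but not \emph{sufficient}; the paper treats the two as equivalent. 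A correct sufficient condition is $\mu \ge d + 2\ln(1/\varepsilon) + \sqrt{2d\ln(1/\varepsilon)}$, i.e.\ $\Lambda$ should carry an additional cross term of order $\sqrt{2(\alpha\,\mathrm{OPT}-C_0)\ln(1/\varepsilon)}/p$ (or one should switch to a tail bound, such as the relative-entropy form you mention, that yields an additive deviation). So your plan is sound and matches the paper's, but to close it you must either enlarge $\Lambda$ as above or change the tail inequality; as stated, neither your sketch nor the paper's proof establishes the theorem with the given $\Lambda$.
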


\begin{proof}
For $m\ge0$ define
\[
  Y_{m+1}\;:=\;\mathbf 1\!\bigl[C_{m+1}>C_m\bigr].
\]
By Assumptions~\ref{ass:evaluator+}–\ref{ass:generator+},
whenever $C_m<\mathrm{OPT}$
\[
  \Pr\!\bigl[Y_{m+1}=1\,\mid\,\text{history}\bigr]
  \;\ge\;p=\beta\gamma ,
\]
and the $Y_{m+1}$ are conditionally independent across rounds.

Let
$
  S_T:=\sum_{m=1}^{T} Y_m
$
be the number of rounds that achieve \emph{any} improvement.
Since each such round covers at least one additional item and never
loses coverage,
\[
  C_T
  \;\ge\;
  C_0+S_T .
\]
Thus
$
  C_T\ge\alpha\,\mathrm{OPT}
$
is implied by
$
  S_T\ge\alpha\,\mathrm{OPT}-C_0 .
$

Each $Y_{m+1}$ stochastically dominates
$\operatorname{Bernoulli}(p)$, so
$S_T$ dominates $\operatorname{Binomial}(T,p)$.
Consequently it suffices to bound a Binomial lower tail.

For $Z\sim\operatorname{Binomial}(T,p)$ with mean $\mu=pT$ and any
$\delta\in(0,1)$, by Chernoff bound,
\[
  \Pr[Z\le(1-\delta)\mu]
  \;\le\;
  \exp\!\bigl(-\tfrac{\delta^{2}\mu}{2}\bigr).
\]
Choose $\delta$ so that
$
  (1-\delta)\mu
  =
  \alpha\,\mathrm{OPT}-C_0,
$
i.e.
$
  \delta
  =
  1-\tfrac{\alpha\,\mathrm{OPT}-C_0}{pT}.
$
Requiring the bound to be $\le\varepsilon$ gives
\[
  pT-(\alpha\,\mathrm{OPT}-C_0)
  \;\ge\;
  2\ln(1/\varepsilon),
\]
which is equivalent to
$
  T\ge\Lambda(\alpha,\beta,\gamma,\mathrm{OPT},\varepsilon).
$
At this $T$
\(
  \Pr[S_T<\alpha\,\mathrm{OPT}-C_0]\le\varepsilon,
\)
so
$
  \Pr[C_T\ge\alpha\,\mathrm{OPT}]\ge1-\varepsilon.
$
\end{proof}

\begin{corollary}[Expected iterations to achieve $\mathrm{OPT}$]
\label{cor:exp+}
Let $T_{\mathrm{OPT}}:=\min\{m:C_m=\mathrm{OPT}\}$ and set
$
  T^\star=(\mathrm{OPT}-C_0)/(\beta\gamma).
$
Then under
Assumptions~\ref{ass:evaluator+}–\ref{ass:generator+}
\[
  \mathbb E[T_{\mathrm{OPT}}]
  \;\le\;
  \frac{\mathrm{OPT}-C_0}{\beta\gamma}
  \;+\;
  \frac{2}{\beta\gamma}.
\]
\end{corollary}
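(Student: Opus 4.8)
The plan is to derive the expectation bound directly from the high-probability guarantee of Theorem~\ref{thm:hp+}, specialized to $\alpha=1$, via the standard tail-integral identity for a nonnegative random variable. The first step is to observe that because the generator enforces the no-regression property (part 3 of Assumption~\ref{ass:generator+}), the coverage sequence $C_0\le C_1\le C_2\le\cdots$ is monotone non-decreasing. Consequently the event $\{C_T\ge\mathrm{OPT}\}$ (equivalently $\{C_T=\mathrm{OPT}\}$, since $\mathrm{OPT}$ is the maximum achievable coverage) coincides exactly with $\{T_{\mathrm{OPT}}\le T\}$. This identification converts the coverage guarantee into a tail bound on the stopping time, namely $\Pr[T_{\mathrm{OPT}}>T]=\Pr[C_T<\mathrm{OPT}]$.

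Next I would instantiate Theorem~\ref{thm:hp+} with $\alpha=1$. Writing $p:=\beta\gamma$ and reading off the Chernoff step inside the theorem's proof, the failure probability at round $T$ is controlled by $\exp\!\bigl(-\tfrac12(pT-(\mathrm{OPT}-C_0))\bigr)$ whenever $T\ge T^\star=(\mathrm{OPT}-C_0)/p$. Hence for all such $T$,
\[
  \Pr[T_{\mathrm{OPT}}>T]
  \;\le\;
  \exp\!\Bigl(-\tfrac12\bigl(pT-(\mathrm{OPT}-C_0)\bigr)\Bigr),
\]
while for $T<T^\star$ I simply use the trivial bound $\Pr[T_{\mathrm{OPT}}>T]\le 1$.

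The final step is to integrate these two regimes. Using $\mathbb E[T_{\mathrm{OPT}}]=\int_0^\infty \Pr[T_{\mathrm{OPT}}>T]\,dT$ (valid for any nonnegative, possibly integer-valued random variable), the contribution from $[0,T^\star]$ is at most $T^\star=(\mathrm{OPT}-C_0)/(\beta\gamma)$, and the contribution from $[T^\star,\infty)$ is $\int_{T^\star}^\infty \exp\!\bigl(-\tfrac{p}{2}(T-T^\star)\bigr)\,dT=2/p=2/(\beta\gamma)$ after the substitution $s=T-T^\star$. Summing the two pieces yields the claimed bound. I would also note in passing that a tighter constant $(\mathrm{OPT}-C_0)/(\beta\gamma)$ follows by modeling $T_{\mathrm{OPT}}$ as a sum of $\mathrm{OPT}-C_0$ independent geometric waiting times with success probability $p$, but the tail-integral route is what reproduces exactly the stated $+2/(\beta\gamma)$ term.

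The only genuine subtlety I anticipate is the monotonicity argument in the first step: it is what legitimizes equating the one-sided coverage event with the stopping-time tail and, crucially, guarantees that once coverage reaches $\mathrm{OPT}$ it remains there, so that $T_{\mathrm{OPT}}$ is well-defined and the theorem's guarantee transfers cleanly to $\Pr[T_{\mathrm{OPT}}>T]$. Everything downstream is a routine Chernoff-plus-integration computation.
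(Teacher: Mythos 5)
Your proof is correct and takes essentially the same route as the paper's: both instantiate Theorem~\ref{thm:hp+} at $\alpha=1$, invert the high-probability guarantee into an exponentially decaying tail bound on $T_{\mathrm{OPT}}$ beyond $T^\star$ (using monotonicity of $C_m$ to identify $\{C_T\ge\mathrm{OPT}\}$ with $\{T_{\mathrm{OPT}}\le T\}$), and finish with the tail formula for the expectation. The only difference is cosmetic --- you integrate the continuous exponential tail over $[T^\star,\infty)$ to get exactly $2/(\beta\gamma)$, whereas the paper discretizes via $\varepsilon_k=e^{-k}$ and sums over blocks of length $2/(\beta\gamma)$.
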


\begin{proof}
Proof details can be seen in Appendix D.
\end{proof}

\section{Experiments}\label{sec:exp}
\subsection{Datasets}
For our experimental evaluation, we utilize the Amazon Review dataset \cite{hou2024bridging}  across four categories: Beauty, Electronics, Fashion, and Home and Kitchen. We randomly sampled 1,000 anchor items from each category. For each anchor item, we generated 10 similar recommended items, yielding a total of 40,000 recommended items across all categories. This comprehensive sampling strategy ensured a diverse and representative dataset for our study. Similar item recommendations were generated using an approximate nearest neighbor (ANN) model in the item embedding space. Item embedding vectors are derived from item categories, titles, and descriptions with the Universal Sentence Encoder (USE) model \cite{cer2018universal}. 


\subsection{Benchmark Models and Evaluation Metrics}

To demonstrate the efficacy of our proposed framework, we compare its performance against four benchmark models: Vanilla GPT (Vanilla), which utilizes a single vanilla LLM call to generate the module title in a single step; Chain of Thought (CoT), which enhances the generation process by instructing the model to think step-by-step, following the method proposed by \citet{kojima2022large}; LLM-guided Chain of Thought (LLM-CoT), which first prompts the LLM to explicitly outline the reasoning steps required for the title generation task before executing the task, as described in \citet{zhang2022automatic}; and the Data-level Recommendation Explanation (DRE) framework, which generates keywords for individual items first and then creates the title based on these keywords \citep{gao2024dre}.

We evaluate the quality and relevance of the generated titles using two metrics.



\begin{itemize}
\item \textbf{BERT Score:} Computes semantic similarity between the generated title and each item's description using token-level cosine similarity of BERT embeddings. We calculate the score for each item-title pair and report the module-level average. Higher scores indicate stronger relevance.

\item \textbf{LLM Judge Score:} Uses GPT-4o \citep{chatgpt}, guided by the Chain of Steps prompting strategy \citep{zheng2024judging}, to assess if a title accurately represents each item. The model outputs 1 (relevant) or 0 (not), and the module-level score is the average across items.
\end{itemize}


\begin{table*}[h]
    \centering
    \caption{Ablation studies of CARTS over four categories.}
    \small
    \begin{tabular}{lllllllll}
        \toprule
        \multicolumn{1}{c}{\multirow{2}{*}{\textbf{Benchmarks}}} & \multicolumn{2}{c}{\textbf{Beauty}} & \multicolumn{2}{c}{\textbf{Electronics}} & \multicolumn{2}{c}{\textbf{Fashion}} & \multicolumn{2}{c}{\textbf{Home and Kitchen}} \\
        \cmidrule(lr){2-3} \cmidrule(lr){4-5} \cmidrule(lr){6-7} \cmidrule(lr){8-9}
        \multicolumn{1}{c}{}                           & \textbf{LLM judge}     & \textbf{BERT}    & \textbf{LLM judge}        & \textbf{BERT}       & \textbf{LLM judge}   & \textbf{BERT}  & \textbf{LLM judge}   & \textbf{BERT}   \\
        \midrule
        CARTS w.t. refinement                       & 0.827          &   0.633              &    0.9215        &     0.652              &  0.876      &  0.702            &   0.927     &   0.627            \\
        CARTS w.t. arbitrator                          &   0.845            &   0.633              &   0.93               &   0.653                &     0.889        &       0.701       &      0.9345       &     0.628          \\
        CARTS                                          &    0.891           &  0.634               &    0.939              &    0.655               &    0.928         &      0.70        &    0.953         &   0.631  \\
        \bottomrule
    \end{tabular}
    \label{tab:ablation}
\end{table*}

\subsection{Experiment Results with Benchmarks}
Table~\ref{tab:offline} presents the performance comparison of four methods—Vanilla, CoT, LLM-CoT, and our proposed method CARTS—on the Amazon Review dataset across four product categories.
We report BERT scores and LLM judge scores using GPT-4o for the generated title results.

The Vanilla method, a single LLM call without explicit guidance, exhibits the lowest performance across all categories on both metrics. It often produces generic titles that lack comprehensive item coverage, leading to lower semantic alignment (BERT Score) and reduced LLM-judged relevance.

The DRE method, which distills keywords from item information to guide title generation, consistently improves upon Vanilla. It shows an LLM judge score improvement of 2-5\% and a BERT score improvement of 1.4-5.3\%, demonstrating the benefit of providing structured input for the LLM (except Electronics LLM judge score).

The CoT approach enhances performance by prompting the model for step-by-step reasoning, improving LLM judge scores by 1-2\% and BERT scores by 1-1.6\% over Vanilla. This guided reasoning helps identify more salient features. However, CoT's single-agent, single-pass nature still limits its ability to fully capture diverse attributes or correct misalignments.

LLM-CoT explicitly separates reasoning and generation, leading to significant gains in LLM judge scores compared to CoT (e.g., Beauty: 0.744 to 0.809; Fashion: 0.751 to 0.835). However, its BERT scores do not show a corresponding improvement, and its effectiveness remains bounded by internal planning without external feedback for iterative refinement.

Our proposed framework, CARTS, achieves the highest performance across all categories and metrics. It significantly improves both LLM judge (12-26\%) and BERT (10-18\%) scores compared to Vanilla. 

In summary, the offline results demonstrate a consistent and substantial performance gap, validating CARTS's design choices and highlighting the importance of agent collaboration and feedback-driven optimization for effective recommendation explanation.

\begin{table*}[h]
    \centering
    \caption{Comparison of different LLMs on LLM Judge Score and BERT Score across four categories}
    \small
    \begin{tabular}{l cc cc cc cc}
        \toprule
        \multirow{2}{*}{\textbf{LLM}} & \multicolumn{2}{c}{\textbf{Beauty}} & \multicolumn{2}{c}{\textbf{Electronics}} & \multicolumn{2}{c}{\textbf{Fashion}} & \multicolumn{2}{c}{\textbf{Home}} \\
        \cmidrule(lr){2-3} \cmidrule(lr){4-5} \cmidrule(lr){6-7} \cmidrule(lr){8-9}
        & \textbf{LLM Judge} & \textbf{BERT} & \textbf{LLM Judge} & \textbf{BERT} & \textbf{LLM Judge} & \textbf{BERT} & \textbf{LLM Judge} & \textbf{BERT} \\
        \midrule
        GPT-4o           & 0.891 & 0.634 & 0.939 & 0.655 & 0.928 & 0.700 & 0.953 & 0.631 \\
        Gemini-2.0-Flash & 0.850 & 0.581 & 0.870 & 0.566 & 0.840 & 0.640 & 0.860 & 0.617 \\
        Gemini-1.5-Flash & 0.820 & 0.630 & 0.840 & 0.566 & 0.810 & 0.656 & 0.830 & 0.567 \\
        LLaMA-3          & 0.790 & 0.611 & 0.810 & 0.556 & 0.780 & 0.553 & 0.800 & 0.565 \\
        GPT-3.5          & 0.600 & 0.600 & 0.620 & 0.641 & 0.580 & 0.695 & 0.610 & 0.565 \\
        \bottomrule
    \end{tabular}
    \label{tab:llm-bert-judge-comparison}
\end{table*}

\subsection{Ablation Study}
We conducted an ablation study to systematically evaluate the individual contributions of the critical components within CARTS. 
Specifically, we compared three variants: (i) CARTS without refinement cycle, testing the impact of the iterative evaluation-refinement loop; (ii) CARTS without arbitrator, examining the effect of removing the final decision-making agent; and (iii) the complete CARTS framework.

Results demonstrate that removing the refinement cycle significantly reduces LLM Judge scores across all categories. LLM judge scores reduces between 1.9\% to 7.2\% for four categories, underscoring the importance of iterative refinement in enhancing the relevance. 
Similarly, removing the arbitrator agent caused a noticeable decline in module title relevance performance, LLM judge score reduces between 1\% to 5.4\% for four categories, highlighting that explicitly selecting the best candidate title among multiple refined options is crucial for achieving optimal coverage. The complete CARTS framework consistently delivered the highest scores, confirming the indispensable role of each component. 




\subsection{Comparision with Different LLM on Summarazation}
Table~\ref{tab:llm-bert-judge-comparison} compares five LLMs using BERT Score and LLM Judge Score across four categories. GPT-4o achieves the highest performance on both metrics, with BERT Scores ranging from 0.631 to 0.700 and LLM Judge Scores consistently above 0.89. This indicates strong semantic coverage and item-level relevance, making GPT-4o the most reliable model for our task.

Gemini-2.0-Flash ranks second, outperforming Gemini-1.5-Flash across all categories. LLaMA-3 follows with moderate performance. In contrast, GPT-3.5 shows the weakest results, with LLM Judge Scores around 0.60 and BERT Scores between 0.565 and 0.695, reflecting poor semantic alignment and relevance.

Moreover, we observe that all models except GPT-3.5 consistently respect the character-length constraint, which is critical in real-world UI applications. Based on these results, we adopt GPT-4o as the final model in our framework due to its superior alignment, relevance, and constraint adherence.

\subsection{Case studies}
In Figure ~\ref{fig:examples}, we show two examples of module titles generated with CARTS. In Figure \ref{fig:examples}(A), this carousel shows a list of Apple MacBook laptops, including Pro and Air two models. The generated title ``\textbf{Sleek and high-performance MacBook Pro and Air}" successfully captures the two models and highlights the two key advantages of them (``sleek" and ``high performance"). In Figure \ref{fig:examples}(B), this carousel presents a list of portable speakers from various brands. Instead of a simple summarization ``Portable Speakers", CARTS provides a carousel title showing the use scenarios of these speakers: "Outdoor and Party Music", making it more engaging and informative.

\begin{figure*}[h]
    \centering    
    \includegraphics[width=1.0\linewidth]{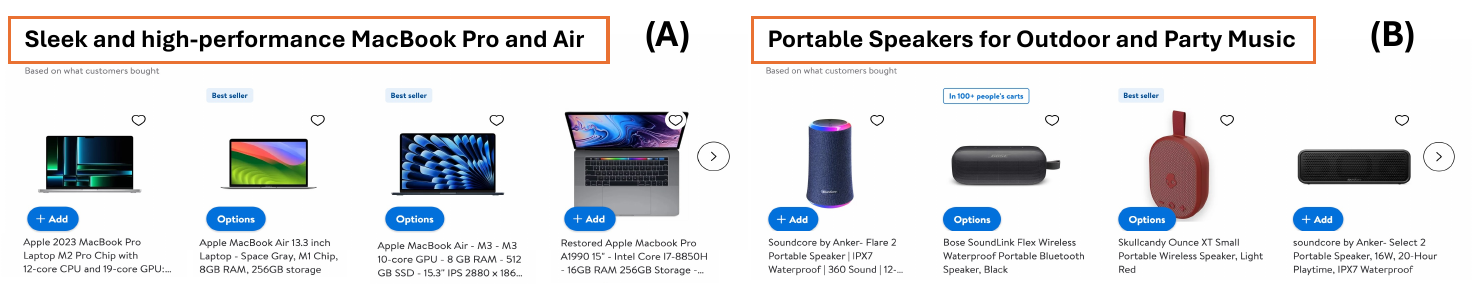}
    \caption{Two module title examples.}
    \label{fig:examples}
\end{figure*}





\section{Online A/B Test Results}
To evaluate the impact of generated module titles for customer engagement and business metrics, we also conducted an A/B test experiment for generated in-house module title results and report the business related metrics, including lifts on click-through rate (CTR), add-to-cart rate (ATCR), and gross merchandise value (GMV). Figure~\ref{fig:examples} shows two module title examples launched in the experiment.

In the A/B testing, the control is a black-box recommendation explanation model, while the variant has the module specific titles generated by CARTS. The traffic was split as 50-50 between control and the variant. The A/B testing results are shown in Table~\ref{tab:ab}. 

The results demonstrate statistically significant increase in CTR, ATCR, and GMV metrics. Specifically, the CTR has an uplift of 0.8\% and ATCR has an uplift of 6.28\%, indicating higher customer engagement with our generated module titles. In addition, the GMV, reflecting the total sales value, showed a remarkable increase of 3.78\%. These results confirm that the generated module titles not only enhance customer-product interactions but also drive more sales for the e-commerce platform. Please note that due to proprietary data protection agreement, details about actual dollar value implementation costs and business metrics cannot be published in the paper, however, increased business metrics far exceed the cost of running our LLM-baed multi-agent pipeline to generate the eCommerce module titles.

\begin{table}[h]
    \centering
    \caption{A/B testing evaluation results. The lift results are percentages with 95\% confidence interval.}
    \label{tab:ab}
    \begin{tabular}{cccc}
        \hline
         & CTR & ATCR & GMV \\
        \hline
        Lift & 0.8 $\pm$ 0.46 &  6.28 $\pm$ 2.07  & 3.78 $\pm$ 0.19 \\
        \hline
    \end{tabular}
\end{table}

\section{Conclusion}
We presented CARTS, a multi-agent LLM framework for generating concise and relevant module-level titles in recommendation systems. Unlike existing multi-agent approaches developed for unstructured domains like law or finance, CARTS addresses the unique challenges of structured input, length constraints, and engagement-driven objectives in recommender settings. By decomposing the task into generation, feedback and arbitration agents, and introducing a Generation-Augmented Generation (GAG) strategy, CARTS achieves high semantic coverage under practical constraints. We further provide theoretical guarantees on convergence to near-optimal coverage. Extensive offline experiments and real-world A/B testing confirm CARTS’s effectiveness in improving title relevance, CTR, and GMV, demonstrating the practical value of collaborative LLM agents in recommendation workflows.

\clearpage

\bibliography{title}

\begin{thebibliography}{27}
\providecommand{\natexlab}[1]{#1}
\providecommand{\url}[1]{\texttt{#1}}
\expandafter\ifx\csname urlstyle\endcsname\relax
  \providecommand{\doi}[1]{doi: #1}\else
  \providecommand{\doi}{doi: \begingroup \urlstyle{rm}\Url}\fi

\bibitem[Catherine et~al.(2017)Catherine, Mazaitis, Eskenazi, and Cohen]{catherine2017explainable}
Catherine, R., Mazaitis, K., Eskenazi, M., and Cohen, W.
\newblock Explainable entity-based recommendations with knowledge graphs.
\newblock \emph{arXiv preprint arXiv:1707.05254}, 2017.

\bibitem[Cer et~al.(2018)Cer, Yang, Kong, Hua, Limtiaco, John, Constant, Guajardo-Cespedes, Yuan, Tar, et~al.]{cer2018universal}
Cer, D., Yang, Y., Kong, S.-y., Hua, N., Limtiaco, N., John, R.~S., Constant, N., Guajardo-Cespedes, M., Yuan, S., Tar, C., et~al.
\newblock Universal sentence encoder.
\newblock \emph{arXiv preprint arXiv:1803.11175}, 2018.

\bibitem[Chen et~al.(2021)Chen, Chen, Shi, and Zhang]{chen2021generate}
Chen, H., Chen, X., Shi, S., and Zhang, Y.
\newblock Generate natural language explanations for recommendation.
\newblock \emph{arXiv preprint arXiv:2101.03392}, 2021.

\bibitem[Chen et~al.(2023)Chen, Ma, Li, Thakurdesai, Xu, Cho, Nag, Korpeoglu, Kumar, and Achan]{chen2023knowledge}
Chen, J., Ma, L., Li, X., Thakurdesai, N., Xu, J., Cho, J.~H., Nag, K., Korpeoglu, E., Kumar, S., and Achan, K.
\newblock Knowledge graph completion models are few-shot learners: An empirical study of relation labeling in e-commerce with llms.
\newblock \emph{arXiv preprint arXiv:2305.09858}, 2023.

\bibitem[Du et~al.(2023)Du, Li, Torralba, Tenenbaum, and Mordatch]{du2023improving}
Du, Y., Li, S., Torralba, A., Tenenbaum, J.~B., and Mordatch, I.
\newblock Improving factuality and reasoning in language models through multiagent debate.
\newblock \emph{arXiv preprint arXiv:2305.14325}, 2023.

\bibitem[Gao et~al.(2024)Gao, Wang, Fang, Chen, Han, and Shang]{gao2024dre}
Gao, S., Wang, Y., Fang, J., Chen, L., Han, P., and Shang, S.
\newblock Dre: Generating recommendation explanations by aligning large language models at data-level.
\newblock \emph{arXiv preprint arXiv:2404.06311}, 2024.

\bibitem[Ge et~al.(2022)Ge, Liu, Fu, Tan, Li, Xu, Li, Xian, and Zhang]{ge2022survey}
Ge, Y., Liu, S., Fu, Z., Tan, J., Li, Z., Xu, S., Li, Y., Xian, Y., and Zhang, Y.
\newblock A survey on trustworthy recommender systems.
\newblock \emph{ACM Transactions on Recommender Systems}, 2022.

\bibitem[Hou et~al.(2024)Hou, Li, He, Yan, Chen, and McAuley]{hou2024bridging}
Hou, Y., Li, J., He, Z., Yan, A., Chen, X., and McAuley, J.
\newblock Bridging language and items for retrieval and recommendation.
\newblock \emph{arXiv preprint arXiv:2403.03952}, 2024.

\bibitem[Huang et~al.(2023)Huang, Yu, Ma, Zhong, Feng, Wang, Chen, Peng, Feng, Qin, et~al.]{huang2023survey}
Huang, L., Yu, W., Ma, W., Zhong, W., Feng, Z., Wang, H., Chen, Q., Peng, W., Feng, X., Qin, B., et~al.
\newblock A survey on hallucination in large language models: Principles, taxonomy, challenges, and open questions.
\newblock \emph{arXiv preprint arXiv:2311.05232}, 2023.

\bibitem[Kojima et~al.(2022)Kojima, Gu, Reid, Matsuo, and Iwasawa]{kojima2022large}
Kojima, T., Gu, S.~S., Reid, M., Matsuo, Y., and Iwasawa, Y.
\newblock Large language models are zero-shot reasoners.
\newblock \emph{Advances in neural information processing systems}, 35:\penalty0 22199--22213, 2022.

\bibitem[Lei et~al.(2023)Lei, Lian, Yao, Huang, Lian, and Xie]{lei2023recexplainer}
Lei, Y., Lian, J., Yao, J., Huang, X., Lian, D., and Xie, X.
\newblock Recexplainer: Aligning large language models for recommendation model interpretability.
\newblock \emph{arXiv preprint arXiv:2311.10947}, 2023.

\bibitem[Liang et~al.(2023)Liang, He, Jiao, Wang, Wang, Wang, Yang, Tu, and Shi]{liang2023encouraging}
Liang, T., He, Z., Jiao, W., Wang, X., Wang, Y., Wang, R., Yang, Y., Tu, Z., and Shi, S.
\newblock Encouraging divergent thinking in large language models through multi-agent debate.
\newblock \emph{arXiv preprint arXiv:2305.19118}, 2023.

\bibitem[Maragheh et~al.(2023{\natexlab{a}})Maragheh, Fang, Irugu, Parikh, Cho, Xu, Sukumar, Patel, Korpeoglu, Kumar, et~al.]{maragheh2023llm}
Maragheh, R.~Y., Fang, C., Irugu, C.~C., Parikh, P., Cho, J., Xu, J., Sukumar, S., Patel, M., Korpeoglu, E., Kumar, S., et~al.
\newblock Llm-take: Theme-aware keyword extraction using large language models.
\newblock In \emph{2023 IEEE International Conference on Big Data (BigData)}, pp.\  4318--4324. IEEE, 2023{\natexlab{a}}.

\bibitem[Maragheh et~al.(2023{\natexlab{b}})Maragheh, Morishetti, Giahi, Nag, Xu, Cho, Korpeoglu, Kumar, and Achan]{maragheh2023llm2}
Maragheh, R.~Y., Morishetti, L., Giahi, R., Nag, K., Xu, J., Cho, J., Korpeoglu, E., Kumar, S., and Achan, K.
\newblock Llm-based aspect augmentations for recommendation systems.
\newblock 2023{\natexlab{b}}.

\bibitem[OpenAI(2023)]{chatgpt}
OpenAI.
\newblock Chatgpt, 2023.
\newblock URL \url{https://openai.com/blog/chatgpt}.

\bibitem[Wang et~al.(2023)Wang, Zhang, Yang, Chen, Tang, Zhang, Chen, Lin, Song, Zhao, et~al.]{wang2023c}
Wang, L., Zhang, J., Yang, H., Chen, Z., Tang, J., Zhang, Z., Chen, X., Lin, Y., Song, R., Zhao, W.~X., et~al.
\newblock When large language model based agent meets user behavior analysis: A novel user simulation paradigm.
\newblock \emph{arXiv preprint ArXiv:2306.02552}, 2023.

\bibitem[Wang et~al.(2024{\natexlab{a}})Wang, Ma, Feng, Zhang, Yang, Zhang, Chen, Tang, Chen, Lin, et~al.]{wang2024survey}
Wang, L., Ma, C., Feng, X., Zhang, Z., Yang, H., Zhang, J., Chen, Z., Tang, J., Chen, X., Lin, Y., et~al.
\newblock A survey on large language model based autonomous agents.
\newblock \emph{Frontiers of Computer Science}, 18\penalty0 (6):\penalty0 186345, 2024{\natexlab{a}}.

\bibitem[Wang et~al.(2018)Wang, Chen, Yang, Wu, Wu, and Xie]{wang2018reinforcement}
Wang, X., Chen, Y., Yang, J., Wu, L., Wu, Z., and Xie, X.
\newblock A reinforcement learning framework for explainable recommendation.
\newblock In \emph{2018 IEEE international conference on data mining (ICDM)}, pp.\  587--596. IEEE, 2018.

\bibitem[Wang et~al.(2024{\natexlab{b}})Wang, Yu, Zheng, Ma, and Zhang]{wang2024multi}
Wang, Z., Yu, Y., Zheng, W., Ma, W., and Zhang, M.
\newblock Multi-agent collaboration framework for recommender systems.
\newblock \emph{arXiv preprint arXiv:2402.15235}, 2024{\natexlab{b}}.

\bibitem[Wei et~al.(2024)Wei, Ren, Tang, Wang, Su, Cheng, Wang, Yin, and Huang]{wei2024llmrec}
Wei, W., Ren, X., Tang, J., Wang, Q., Su, L., Cheng, S., Wang, J., Yin, D., and Huang, C.
\newblock Llmrec: Large language models with graph augmentation for recommendation.
\newblock In \emph{Proceedings of the 17th ACM International Conference on Web Search and Data Mining}, pp.\  806--815, 2024.

\bibitem[Wu et~al.(2023)Wu, Bansal, Zhang, Wu, Zhang, Zhu, Li, Jiang, Zhang, and Wang]{wu2023autogen}
Wu, Q., Bansal, G., Zhang, J., Wu, Y., Zhang, S., Zhu, E., Li, B., Jiang, L., Zhang, X., and Wang, C.
\newblock Autogen: Enabling next-gen llm applications via multi-agent conversation framework.
\newblock \emph{arXiv preprint arXiv:2308.08155}, 2023.

\bibitem[Yao et~al.(2022)Yao, Zhao, Yu, Du, Shafran, Narasimhan, and Cao]{yao2022react}
Yao, S., Zhao, J., Yu, D., Du, N., Shafran, I., Narasimhan, K., and Cao, Y.
\newblock React: Synergizing reasoning and acting in language models.
\newblock \emph{arXiv preprint arXiv:2210.03629}, 2022.

\bibitem[Zhang et~al.(2023)Zhang, Sheng, Chen, Li, Deng, Wang, and Chua]{zhang2023generative}
Zhang, A., Sheng, L., Chen, Y., Li, H., Deng, Y., Wang, X., and Chua, T.-S.
\newblock On generative agents in recommendation.
\newblock \emph{arXiv preprint arXiv:2310.10108}, 2023.

\bibitem[Zhang et~al.(2020)Zhang, Chen, et~al.]{zhang2020explainable}
Zhang, Y., Chen, X., et~al.
\newblock Explainable recommendation: A survey and new perspectives.
\newblock \emph{Foundations and Trends{\textregistered} in Information Retrieval}, 14\penalty0 (1):\penalty0 1--101, 2020.

\bibitem[Zhang et~al.(2022)Zhang, Zhang, Li, and Smola]{zhang2022automatic}
Zhang, Z., Zhang, A., Li, M., and Smola, A.
\newblock Automatic chain of thought prompting in large language models.
\newblock \emph{arXiv preprint arXiv:2210.03493}, 2022.

\bibitem[Zhang et~al.(2024)Zhang, Bo, Ma, Li, Chen, Dai, Zhu, Dong, and Wen]{zhang2024survey}
Zhang, Z., Bo, X., Ma, C., Li, R., Chen, X., Dai, Q., Zhu, J., Dong, Z., and Wen, J.-R.
\newblock A survey on the memory mechanism of large language model based agents.
\newblock \emph{arXiv preprint arXiv:2404.13501}, 2024.

\bibitem[Zheng et~al.(2024)Zheng, Chiang, Sheng, Zhuang, Wu, Zhuang, Lin, Li, Li, Xing, et~al.]{zheng2024judging}
Zheng, L., Chiang, W.-L., Sheng, Y., Zhuang, S., Wu, Z., Zhuang, Y., Lin, Z., Li, Z., Li, D., Xing, E., et~al.
\newblock Judging llm-as-a-judge with mt-bench and chatbot arena.
\newblock \emph{Advances in Neural Information Processing Systems}, 36, 2024.

\end{thebibliography}
\bibliographystyle{icml2025}

\newpage
\appendix
\onecolumn
\section*{Appendix A: Algorithmic Summary}
Algorithm\,\ref{alg:CARTS} summarizes the entire \textbf{CARTS} procedure in pseudocode. This procedure approximates the objective of maximizing relevance coverage (i.e., number of items matched) while respecting a character limit $K$ and any additional constraints (see Eqs.\,\ref{eq:obj:max-items}--\ref{eq:constr:others}).

Note that under CARTS by prompting an LLM to extract only $l$ keywords, CARTS highlights each item's essential attributes, increasing the chance of capturing shared properties across items. Multiple language agents collaborate iteratively. A \texttt{feedback} agent critiques each intermediate title for coverage (number of relevant items) and length adherence ($\le K$), while the \texttt{generator} agent refines accordingly. In addition, generating $k$ distinct titles enables diversity. The \texttt{moderator} summarizes each candidate’s coverage, length compliance, and reasoning. Finally, the \texttt{arbitrator} selects the single best title that best satisfies the optimization criteria.

In Section\,\ref{sec:exp}, we show that CARTS significantly improves the relevance and transparency of generated module titles for carousel recommendations, enhancing both offline coverage metrics and online user engagement.

\begin{algorithm*}[h]
\caption{CARTS: Multi-Agent Generation Augmented Generation}
\label{alg:CARTS}
\begin{algorithmic}[1]
\STATE {\bfseries Input:} A set of items $\mathcal{I} = \{I_1, \ldots, I_N\}$, number of keywords $l$, number of title samples $k$, length limit $K$
\STATE {\bfseries Output:} Final module title $M_{\mathrm{final}}$

\vspace{1ex}
\STATE \textbf{1. Keyword Distillation}
\STATE \textbf{for} $i = 1$ to $N$ \textbf{do}
    \STATE \quad $\{K_i^1, \dots, K_i^l\} \gets \mathrm{DISTILL}(C_i, T_i, P_i)$
\STATE \textbf{end for}

\vspace{1ex}
\STATE \textbf{2. Initial Title Generation}
\STATE $M_{\mathrm{title}}^{(0)} \gets G_{\mathrm{title}}\bigl(\{(I_i, K_i^{1:l})\}_{i=1}^N\bigr)$
\COMMENT{Prompt includes coverage and length reminders.}

\vspace{1ex}
\STATE \textbf{3. Refinement Circle}
\STATE \textbf{for} $m = 1$ to $k$ \textbf{do}
    \STATE \quad $\mathrm{Feedback}^{(m)} \gets \mathrm{EVAL}\Bigl(M_{\mathrm{title}}^{(m-1)},\, \mathcal{I},\, K\Bigr)$
    \COMMENT{Checks coverage \& length.}
    \STATE \quad $M_{\mathrm{title}}^{(m)} \gets \mathrm{GEN}\Bigl(\mathrm{Feedback}^{(m)},\, M_{\mathrm{title}}^{(m-1)},\, \{K_i^j\}\Bigr)$
\STATE \textbf{end for}

\vspace{1ex}
\STATE \textbf{4. Arbitration}
\STATE $S_{\mathrm{mod}} \gets \mathrm{MOD}\Bigl(\{(M_{\mathrm{title}}^{(m)}, R^{(m)})\}_{m=1}^{k}\Bigr)$
\STATE $M_{\mathrm{final}} \gets \mathrm{ARBIT}\Bigl(\{M_{\mathrm{title}}^{(m)}\}_{m=1}^{k},\, S_{\mathrm{mod}},\, K\Bigr)$
\COMMENT{Selects title maximizing coverage while respecting $K$.}

\vspace{1ex}
\STATE \textbf{return} $M_{\mathrm{final}}$
\end{algorithmic}
\end{algorithm*}

\section*{Appendix B: Sample Generation Results for CARTS}
\begin{figure}
    \centering
    \includegraphics[width=\linewidth]{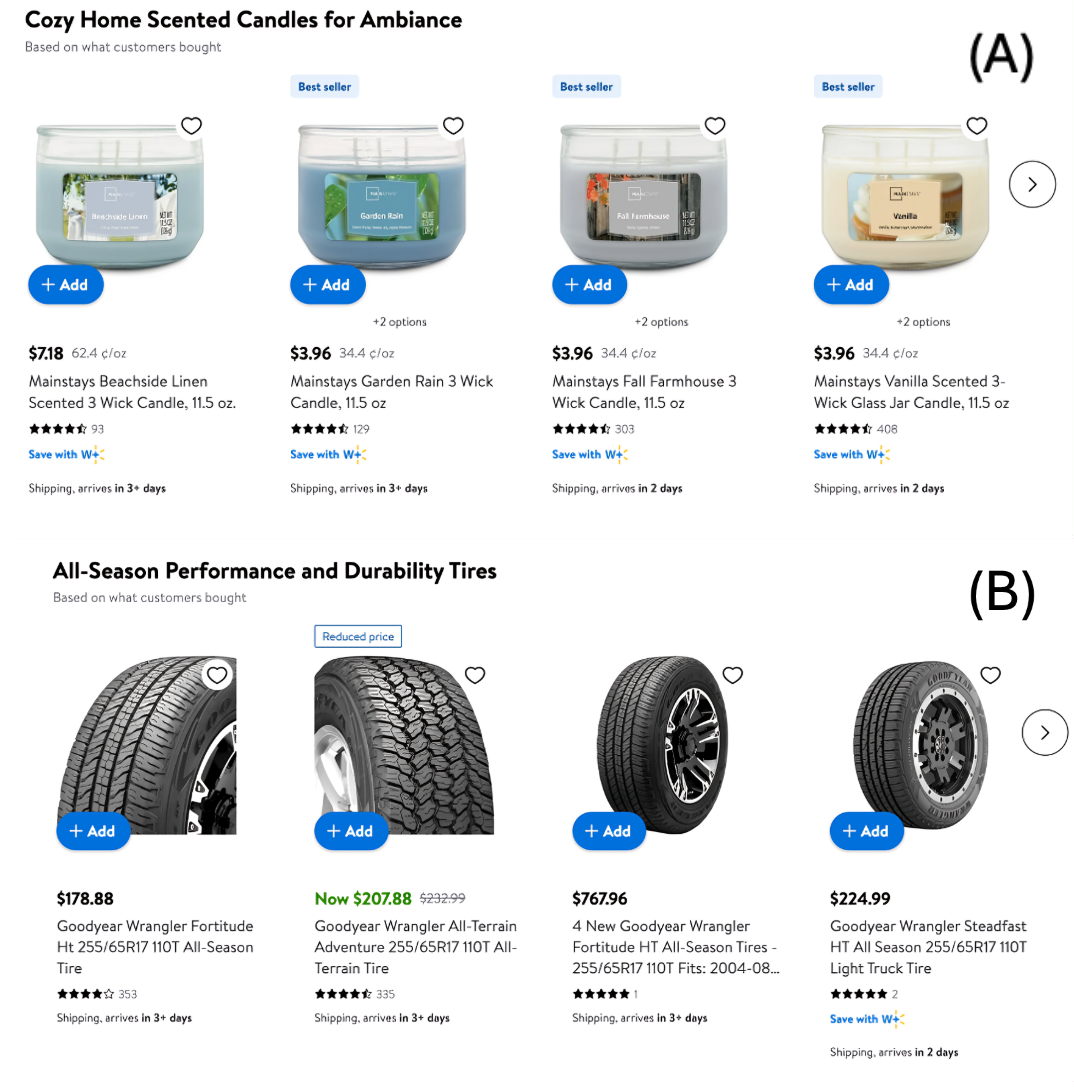}
    \caption{Sample Generation Results for CARTS. Examples from (A) Candles, (B) Tires}
    \label{fig:enter-labe-1}
\end{figure}

\begin{figure}
    \centering
    \includegraphics[width=\textwidth]{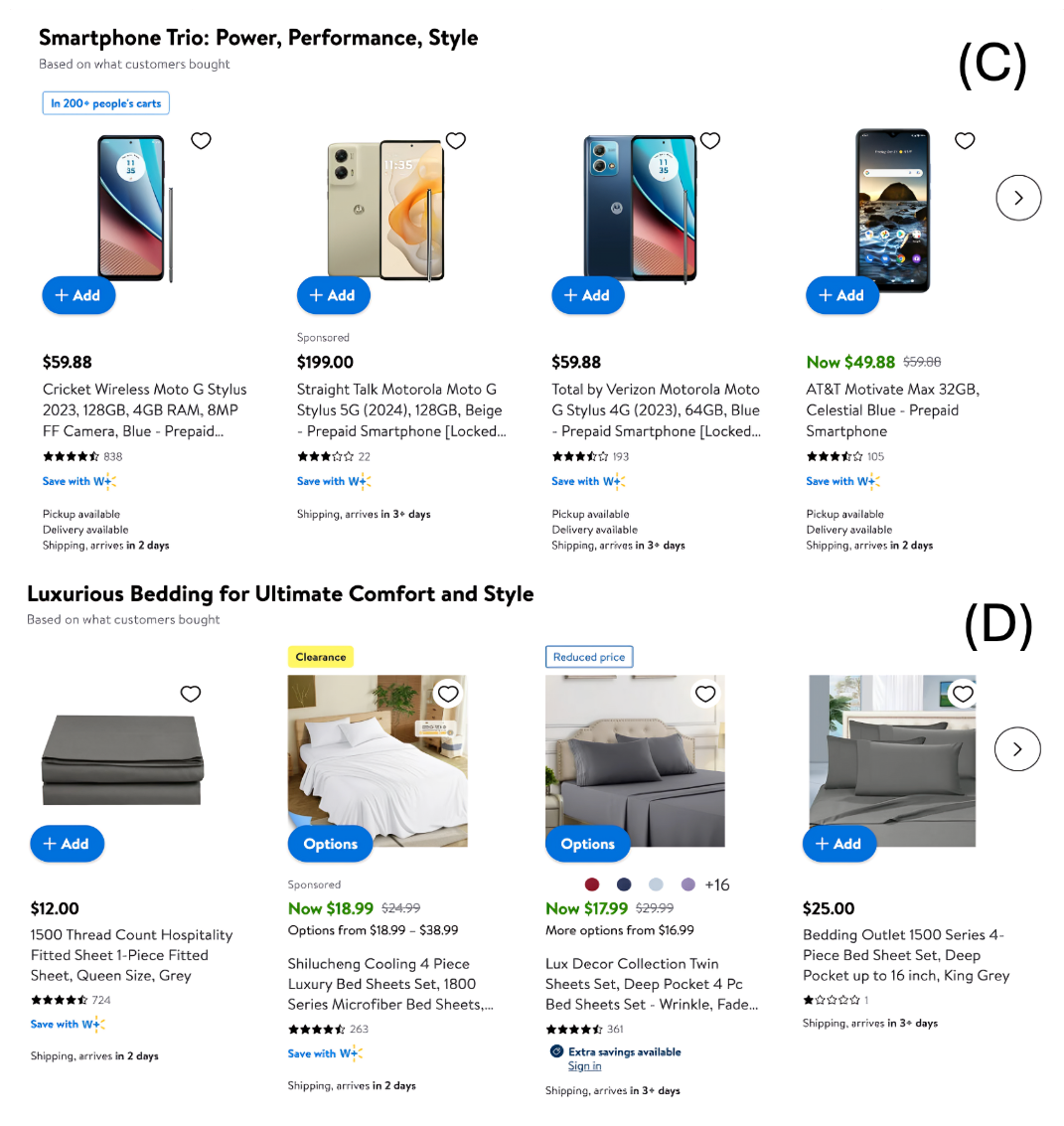}
    \caption{Sample Generation Results for CARTS. Examples from (C) Smart Phones, and (D) Bedding Products}
    \label{fig:enter-label-2}
\end{figure}

\newpage
\section*{Appendix C: prompts for CARTS}
\subsection*{Keywords generation prompt}
\begin{tcolorbox}
You are an English speaking eCommerce catalog specialist. You are an expert in generating keywords for a given product. \\

Consider the following product: \\
\{prod\_info\} \\

Output at most 5 short keywords relevant to the product. \\

Please just output the keywords and separate them with commas. \\
Do not add any other text.\\
\end{tcolorbox}

\subsection*{GAG prompt}
\begin{tcolorbox}

You are an eCommerce specialist. Your expertise is in generating a title for a group of items presented in an eCommerce module. \\
Your task is to name this eCommerce module.\\

The list of products and their associated keywords are: \\
\{prod\_info\_and\_keys\}\\

Generate a module title for the list of items to explain them, aiming to increase customer engagement and improve eCommerce module transparency. \\

Restrict the response to the following format strictly:\\

"title: A maximum of 10 word title that is relevant to the list of items."\\
\end{tcolorbox}

\subsection*{Feedback Prompt}
\begin{tcolorbox}
You are an evaluator that evaluates the generated titles for a group of items. \\

Here is the generated title: \\
\{title\} \\
Here is the set of items and their associated keywords that the title is generated for them:\\\{prod\_info\_and\_keys\}\\

Determine if the title is relevant enough to some or all of the items and can increase customer engagement or improve ecommerece module transparency. \\
If so, provide concise feedback pointing to at least one such uncovered item that the generator could improve on.

Keep the feedback within 30 words. 
\end{tcolorbox}

\subsection*{Regeneration Prompt}
\begin{tcolorbox}
You are an eCommerce title generation specialist working in a refinement circle. Your task is to improve a previously generated title based on feedback from an evaluator. \\

The list of items and their associated keywords are: \\
\{prod\_info\_and\_keys\} \\

The original title was: \\
\{title\} \\

The evaluator provided the following feedback: \\
\{feedback\} \\

Generate a refined title that \\
(1) addresses at least one uncovered or weakly covered item indicated in the feedback\\
(2) preserves all existing item coverage in the original title. \\
(3) Ensure the revised title is no more than 10 words and formatted exactly as follows: \\
title: <your refined title>





\end{tcolorbox}

\subsection*{Arbitrator Prompt}
\begin{tcolorbox}
You are an eCommerece specialist tasked with selecting the best title for an eCommerce module after refinement. \\
You are given two titles: \\

1. The original title: {title} \\
2. The refined title: {title\_2} \\

These titles are generated based on the following list of products and their associated keywords: \\
\{prod\_info\_and\_keys\}\\

Select the title that is more relevant and likely to increase customer engagement and improve module transparency. \\
Output only the selected title.
\end{tcolorbox}

\section*{Appendix D: proof of collary 3.4}
\begin{proof}
Apply Theorem~\ref{thm:hp+} with $\alpha=1$.
For any integer $k\ge0$ set
\(
\varepsilon_k:=e^{-k}.
\)
Then with
\(
T_k:=\Lambda(1,\beta,\gamma,\mathrm{OPT},\varepsilon_k)
=T^\star+\tfrac{2k}{\beta\gamma}
\)
we have
\(
\Pr[T_{\mathrm{OPT}}\;>\;T_k]\;\le\;e^{-k}.
\)
Using the tail–sum representation
$
\mathbb{E}[T_{\mathrm{OPT}}]
  = \sum_{t\ge0} \Pr[T_{\mathrm{OPT}}>t]
$
and splitting the sum into blocks of length
$2/(\beta\gamma)$
gives
\[
\mathbb{E}[T_{\mathrm{OPT}}]
  \;\le\;
  T^\star
  \;+\;
  \frac{2}{\beta\gamma}\sum_{k\ge0} e^{-k}
  \;\le\;
  T^\star+\frac{2}{\beta\gamma}.
\]
\end{proof}

\end{document}